\newtheorem{theorem}{Theorem}
\newtheorem{lemma}[theorem]{Lemma}
\newtheorem{proposition}[theorem]{Proposition}
\newtheorem{corollary}[theorem]{Corollary}
\theoremstyle{definition}
\newtheorem{definition}[theorem]{Definition}
\theoremstyle{remark}
\newtheorem{remark}[theorem]{Remark}
\numberwithin{equation}{section}
\begin{document}

\title[On Bounds for Ring-Based Coding Theory]{On Bounds for
  Ring-Based Coding Theory}

\author[N. Gassner]{Niklas Gassner}
\address{Institute of Mathematics\\
  University of Zurich\\
  Winterthurerstrasse 190\\
  8057 Zurich, Switzerland\\
} \email{niklas.gassner@math.uzh.ch}

\author[M. Greferath]{Marcus Greferath}
\address{UCD School of Mathematics and Statistics\\
  University College of Dublin\\
  Belfield, Dublin 4 } \email{marcus.greferath@ucd.ie}

\author[J. Rosenthal]{Joachim Rosenthal }
\address{Institute of Mathematics\\
  University of Zurich\\
  Winterthurerstrasse 190\\
  8057 Zurich, Switzerland\\
} \email{rosenthal@math.uzh.ch}

\author[V. Weger]{Violetta Weger}
\address{UCD School of Mathematics and Statistics\\
  University College of Dublin\\
  Belfield, Dublin 4 }  \email{violetta.weger@math.uzh.ch}

\thanks{The work of J. Rosenthal was supported in part by the Swiss
  National Science Foundation through grant no.~188430.  The work of
  V. Weger was supported in part by the Swiss National Science
  Foundation through grant no. 195290}

\subjclass[]{}

\keywords{Rings, Weight, Distance, Coding Theory, Johnson Bound,
  Plotkin Bound}

\begin{abstract}
  Coding Theory where the alphabet is identified with the elements of
  a ring or a module has become an important research topic over the
  last 30 years. Such codes over rings had important applications and
  many interesting mathematical problems are related to this line of
  research.

  It has been well established, that with the generalization of the
  algebraic structure to rings there is a need to also generalize the
  underlying metric beyond the usual Hamming weight used in
  traditional coding theory over finite fields.

  This paper introduces a new weight, called the overweight, which can be
  seen as a generalization of the Lee weight on the integers modulo
  $4$. For this new weight we provide a number of well-known bounds,
  like a Plotkin bound, a sphere-packing bound, and a
  Gilbert-Varshamov bound. A further highlight is the proof of a
  Johnson bound for the homogeneous weight on a general finite
  Frobenius ring.
\end{abstract}

\maketitle

\section{Introduction}
\label{sec:introduction}

Coding theoretic experience has shown that considering linear codes
over finite fields often yields significant complexity advantages over
the non-linear counterparts particularly, when it comes to complex
tasks like encoding and decoding. On the other side, it was recognized
early \cite{ke72, pr68} that the class of binary block codes contained
excellent code families, which were not linear (Preparata, Kerdock
codes, Goethals and Goethals-Delsarte codes).  For a long time it could
not be explained, why these families exhibit formal duality properties
in terms of their distance enumerators that occur only on those among
linear codes and their duals.

A true breakthrough in the understanding of this behavior came in the
early 1990's when after preceding work by Nechaev \cite{ne89} the
paper by Hammons et al.~\cite{hamm94} discovered that these families
allow a representation in terms of ${\mathbb Z}_4$-linear codes.

A crucial condition for this ring-theoretic representation was that
${\mathbb Z}_4$ was equipped with an alternative metric, the Lee
weight, rather than with the traditional Hamming weight, which only
distinguishes whether an element is zero or non-zero. The Lee weight is
finer, assigning $2$ a higher weight than the other non-zero elements
of this ring.

The fact that the traditional settings of linear coding theory (finite
fields with Hamming metric) are actually too narrow, suggests to
expand the theory in at least two directions: on the algebraic part,
the next more natural algebraic structure serving as alphabet for
linear coding is that of finite rings (and modules). On the metrical
part, the appropriateness of the Lee weight for ${\mathbb Z}_4$-linear
coding suggests that the distance function for a generalized coding
theory also requires generalization as well.

Since these ground-breaking observations, an entire discipline arose
within algebraic coding theory. A considerable community of scholars
have been developing results in various directions, among them code
duality, weight-enumeration, code equivalence, weight functions,
homogeneous weights, existence bounds, code optimality, decoding
schemes, to mention only a few.

The paper at hand aims at providing a further contribution to this
discipline, by introducing the 
{\em overweight\/} on a finite ring. To the authors best knowledge this
concept appeared for the first time in the Master thesis of the first author~\cite{ga20m}
and has not been considered before. The overweight on
a finite ring is extremal in the sense, that it is a positive definite
function that satisfies the triangle inequality. For this overweight,
we will develop a number of standard existence bounds, like a
sphere-packing bound, a Plotkin bound, and a version of the
(assertive) Gilbert-Varshamov bound.

In the final part of this article we derive a general Johnson bound for the homogeneous weight on a finite Frobenius ring. This result is important, as it is closely connected to list
decoding capabilities.

\section{Preliminaries}\label{sec:prelim}

Throughout this paper we will
consider ${R}$ to be a finite ring with identity, denoted by $1$. If
$R$ is a finite ring, we denote by $R^\times$ its group of invertible
elements, also known as units.

Let us recall some preliminaries in coding theory, where we focus on
ring-linear coding theory.

For $q$ a prime power, let us denote by $\mathbb{F}_q$ the finite
field with $q$ elements.  In traditional coding theory we consider a
linear code to be a subspace of a vector space over a finite field.
 
\begin{definition} Let $q$ be a prime power, and let $k \leq n$ be
  non-negative integers.  A linear subspace ${C}$ of $\mathbb{F}_q^n$
  of dimension $k$ is called an $[n,k]$-linear code.
\end{definition}
  
In the paper at hand, we focus on a more general setting where the
ambient space is a module over a finite ring.

\begin{definition} Let $n \in \mathbb{N}$, and let $R$ be a finite
  ring.  A submodule ${C}$ of $_R{R}^n$ of size $M=|C|$ is called a
  left ${R}$-linear $(n,M)$ code.
\end{definition}

\begin{definition}
  Let $R$ be a finite ring. A real-valued function $w$ on $R$ is
  called a {\em weight\/}, if it is non-negative, and if $w(0)=0$. It
  is natural to identify $w$ with its additive extension to $R^n$, and
  so, we will always write $w(x) = \sum_{i=1}^n w(x_i)$ for all
  $x\in R^n$. Every weight $w: R \longrightarrow {\mathbb R}$ induces
  what we define to be a {\em distance\/}
  $d: R\times R \longrightarrow {\mathbb R}$ by $d(x,y) =
  w(x-y)$. Again, we will identify $d$ with its natural additive
  extension to $R^n \times R^n$.
\end{definition}

The most prominent and best studied weight in traditional coding
theory is the Hamming weight.
 
\begin{definition} Let $n \in \mathbb{N}$.  The Hamming weight of a
  vector $ x \in {R}^n$ is defined as the size of its support
  $$
  w_H(x) = \left| \{ i \in \{1, \ldots, n\} \mid x_i \neq 0\} \right|,
  $$
  and the Hamming distance between $x$ and $y \in {R}^n$ is given by
 $$
 d_H(x,y) = \left| \{ i \in \{1, \ldots, n\} \mid x_i \neq y_i\}
 \right| = w_H(x-y).
 $$
\end{definition}
 
The minimal Hamming distance of a linear code is then defined as the
minimal distance between two different codewords
$$d_H({C}) = \min\{ d_H(x,y) \mid x,y \in {C},\, x\neq y\}.$$

Note that the concept of minimal distance can be applied for any underlying weight $w$.
 
Since we will establish a Plotkin bound, let us recall here the
Plotkin bound over finite fields equipped with the Hamming metric.

\begin{theorem}[Plotkin bound]
  Let ${C}$ be an $(n,M)$ block code over $\mathbb{F}_q$ with minimal
  Hamming distance $d.$ If $d > \frac{q-1}{q}n$, then
 $$ M  \; \leq \; \frac{d}{d-\frac{q-1}{q}n}.$$
\end{theorem}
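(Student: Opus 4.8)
The plan is to prove the classical Plotkin bound by a double-counting argument on the sum of all pairwise Hamming distances between codewords. Let $C$ be an $(n,M)$ block code over $\mathbb{F}_q$ with minimal distance $d$. I would consider the quantity
\[
S \;=\; \sum_{x \in C} \sum_{y \in C} d_H(x,y),
\]
and estimate it in two different ways, one giving a lower bound in terms of $d$ and one giving an upper bound in terms of $n$ and $q$.

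For the lower bound, note that every ordered pair $(x,y)$ with $x \neq y$ contributes at least $d$ to $S$, since $d_H(x,y) \geq d$ by the definition of minimal distance, while the $M$ diagonal pairs contribute $0$. There are $M(M-1)$ off-diagonal ordered pairs, so $S \geq M(M-1)d$.

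For the upper bound, I would swap the order of summation and count the contribution column by column. Writing $d_H(x,y) = \sum_{i=1}^n [x_i \neq y_i]$, we get $S = \sum_{i=1}^n \sum_{x,y \in C} [x_i \neq y_i]$. Fixing a coordinate $i$, let $m_{i,a}$ denote the number of codewords whose $i$-th entry equals $a \in \mathbb{F}_q$; then $\sum_{a} m_{i,a} = M$, and the number of ordered pairs differing in coordinate $i$ is $M^2 - \sum_{a \in \mathbb{F}_q} m_{i,a}^2$. The key step, and the main obstacle, is bounding this expression: by the Cauchy--Schwarz inequality (equivalently, the power-mean inequality) applied to the $q$ nonnegative integers $m_{i,a}$ summing to $M$, one has $\sum_{a} m_{i,a}^2 \geq M^2/q$. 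Hence each column contributes at most $M^2(1 - 1/q) = M^2 \frac{q-1}{q}$, and summing over the $n$ coordinates yields $S \leq n M^2 \frac{q-1}{q}$.

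Combining the two estimates gives $M(M-1)d \leq n M^2 \frac{q-1}{q}$. Dividing by $M$ and rearranging, this reads $(M-1)d \leq M n \frac{q-1}{q}$, i.e. $M\bigl(d - n\tfrac{q-1}{q}\bigr) \leq d$. Under the hypothesis $d > \frac{q-1}{q}n$ the parenthesized factor is strictly positive, so dividing by it preserves the inequality direction and produces exactly
\[
M \;\leq\; \frac{d}{\,d - \frac{q-1}{q}n\,},
\]
as claimed. The only subtle point worth flagging is ensuring the divisor is positive before dividing, which is precisely what the distance hypothesis guarantees.
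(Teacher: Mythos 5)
Your proof is correct and is the standard double-counting argument; the paper does not prove this classical statement itself (it cites it as known, referring to \cite{vanlint}), but your method --- summing all pairwise distances, bounding below by $M(M-1)d$ and above column-by-column via Cauchy--Schwarz --- is exactly the strategy the paper employs in its own generalizations (Lemma~\ref{hammingaverage} is your inequality $\sum_a m_{i,a}^2 \geq M^2/q$ recast with the probability distribution $P_i(a)=m_{i,a}/M$, and Proposition~\ref{thatineq} is your two-sided estimate of $S$). The final rearrangement, including the check that $d-\frac{q-1}{q}n>0$ before dividing, is handled correctly.
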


 \begin{definition}\label{homwt}
   A weight $w: R \longrightarrow {\mathbb R}$ is called (left)
   homogeneous of average value $\gamma>0$, if $w(0)=0$ and the
   following conditions hold:
   \begin{itemize}
   \item[(i)] For all $x,y$ with $Rx=Ry$ we have that $w(x)=w(y)$.
   \item[(ii)] For every non-zero ideal $I \leq {_RR}$, it
     holds that $$\frac{1}{|I|} \sum_{x\in I} w(x)\; = \; \gamma.$$
   \end{itemize}
 \end{definition}

 The homogeneous weight was first introduced by Constantinescu and
 Heise in \cite{heise} in the context of coding over integer residue
 rings. It was later generalised by Greferath and
 Schmidt~\cite{gr99} to  arbitrary finite rings, where the ideal $I$ in Definition \ref{homwt} was assumed to be a
 principal ideal. In its original form, however the homogeneous weight
 only exists on finite Frobenius rings.

 It can be shown that a left homogeneous weight is at the same time
 right homogeneous, and for this reason, we will omit the reference to
 any side for the sequel.

\begin{theorem}[Plotkin bound for homogeneous weights, \text{ \cite[Theorem 2.2]{gr04}}]
  Let $w$ be a homogeneous weight of average value $\gamma$ on $R$,
  and let $C$ be an $(n,M)$ block code over $R$ with minimal
  homogeneous distance $d$. If $\gamma n < d$ then
$$M \; \leq \; \frac{d}{d- \gamma n}.$$
\end{theorem}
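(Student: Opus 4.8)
The plan is to run the classical double-counting argument behind every Plotkin bound, bringing in the homogeneity axioms only at the very last (and hardest) step. First I would consider the sum of all pairwise distances taken over \emph{ordered} pairs of codewords,
$$S \;=\; \sum_{x,y\in C} w(x-y).$$
Since distinct codewords lie at distance at least $d$ while equal codewords contribute $0$, and there are exactly $M(M-1)$ ordered pairs of distinct codewords, this yields the lower bound $S \ge M(M-1)\,d$.

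For a matching upper bound I would split $S$ coordinatewise. Writing $a^{(i)}_r = |\{x\in C : x_i = r\}|$ for the number of codewords having entry $r$ in position $i$, the additivity of $w$ gives
$$S \;=\; \sum_{i=1}^n \sum_{r,s\in R} a^{(i)}_r a^{(i)}_s\, w(r-s),$$
so it suffices to prove the per-column estimate $\sum_{r,s} a_r a_s\, w(r-s) \le \gamma M^2$ for all nonnegative integers $a_r$ with $\sum_r a_r = M$. Here the additive structure enters: setting $a_r = M/|R| + b_r$ with $\sum_r b_r = 0$, and using that $s\mapsto r-s$ is a bijection together with the homogeneity identity $\sum_{s\in R} w(r-s) = \gamma|R|$ (which is condition (ii) applied to the nonzero ideal $I=R$), the linear cross terms cancel and one is left with
$$\sum_{r,s} a_r a_s\, w(r-s) \;=\; \gamma M^2 \;+\; \sum_{r,s} b_r b_s\, w(r-s).$$
The entire bound thereby reduces to showing that the quadratic form $Q(b) = \sum_{r,s} b_r b_s\, w(r-s)$ is non-positive on the hyperplane $\sum_r b_r = 0$.

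I expect this last reduction to be the main obstacle, since it is precisely where the full strength of homogeneity over \emph{all} ideals, and not merely the full-ring average, is required: two weights with the same full-ring average could behave very differently here. Because the homogeneous weight exists only on finite Frobenius rings, I would invoke the known character-sum description of $w$: if $\chi$ is a generating character of $R$, then $w(x) = \gamma\bigl(1 - \tfrac{1}{|R^\times|}\sum_{u\in R^\times}\chi(ux)\bigr)$. Substituting this and using $\sum_r b_r = 0$ to annihilate the constant term, the form collapses, after regrouping the character sums, to
$$Q(b) \;=\; -\,\frac{\gamma}{|R^\times|}\sum_{u\in R^\times}\Bigl|\,\sum_{r\in R} b_r\,\chi(ur)\,\Bigr|^2 \;\le\; 0,$$
which is manifestly non-positive. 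This proves $S \le \gamma n M^2$, and combining it with $S \ge M(M-1)\,d$ gives $M(M-1)\,d \le \gamma n M^2$. Dividing by $M$ and rearranging—legitimate exactly because the hypothesis $\gamma n < d$ makes $d-\gamma n$ positive—delivers $M \le d/(d-\gamma n)$, as claimed.
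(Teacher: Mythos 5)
Your argument is correct, but note that the paper itself does not prove this theorem: it is quoted verbatim from \cite[Theorem 2.2]{gr04}. The natural in-paper comparison is therefore with the authors' proof of the analogous Plotkin bound for the overweight (Lemma \ref{hammingaverage}, Propositions \ref{probineq} and \ref{thatineq}). Both you and the paper run the same outer double count, $M(M-1)d \le \sum_{x,y\in C} w(x-y) \le \gamma n M^2$, followed by the same division by $M$; the difference lies entirely in the per-coordinate estimate. The paper (in the overweight setting) partitions $R$ into cosets of the maximal ideal, reduces to a Hamming-type estimate on each coset, and closes with Cauchy--Schwarz; this is elementary but exploits the specific two-valued structure of the overweight on a local ring. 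You instead centre the column distribution at the uniform one, use condition (ii) for $I=R$ to kill the constant and linear terms, and then prove negative semidefiniteness of the residual quadratic form $Q(b)=\sum_{r,s}b_rb_s\,w(r-s)$ on the zero-sum hyperplane via Honold's character formula $w(x)=\gamma\bigl(1-\tfrac{1}{|R^\times|}\sum_{u\in R^\times}\chi(ux)\bigr)$, which diagonalizes $Q$ as a negative combination of squared character sums. This is exactly where, as you anticipate, the full Frobenius hypothesis enters: the character formula is what encodes homogeneity over all ideals, not just the full-ring average, and it is legitimate here because (as the paper remarks after Definition \ref{homwt}) homogeneous weights in this sense exist only on finite Frobenius rings. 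Your route is less elementary but more uniform -- it works for any homogeneous weight on any finite Frobenius ring in one stroke, whereas the paper's coset-plus-Cauchy--Schwarz device is tailored to the overweight on a local ring. The endgame ($M(d-\gamma n)\le d$, then divide by the positive quantity $d-\gamma n$) is identical and correct.
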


\section{Bounds for the Overweight}

In this section we  introduce  the {\em
  overweight\/}, a generalization of the Lee weight on $\mathbb{Z}_4$
to arbitrary finite rings.  We will develop an analogue of the Plotkin
bound for the overweight in this case.

\begin{definition}
  Let $R$ be a finite ring.  The \emph{overweight} on $R$ is defined
  as $$ W: R \longrightarrow \mathbb{R},\quad x \mapsto \begin{cases}
    0 &  \text{if} \ x = 0, \\
    1 & \text{if} \ x \in R^\times, \\
    2 & \text{otherwise.}
  \end{cases}$$
\end{definition}

Clearly, the overweight function is a weight in the sense of our
earlier definition. It is extremal in its property to still satisfy
the triangle inequality. As agreed earlier, we will denote by $W$
also its additive expansion to $R^n$, given by
$W(x) = \sum_{i=1}^n W(x_i).$ Following from its definition, we get
the following properties:

\begin{lemma}
  Let $x,y \in R^n$. Then the overweight function satisfies:
  \begin{itemize}
  \item[i)] $W(x) \geq 0$ and $W(x) = 0$ if and only if $x = 0$.
  \item[ii)] If $Rx=Ry$ then $W(x) = W(y)$, in particular
    $W(x)=W(-x)$.
  \item[iii)] $W(x+y) \leq W(x) + W(y)$.
  \end{itemize}
\end{lemma}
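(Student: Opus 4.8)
The three properties are stated for the additive extension $W(x)=\sum_{i=1}^n W(x_i)$, so the natural strategy is to verify each property first for a single ring element (the $n=1$ case) and then lift to $R^n$ by summing over coordinates. I would handle the three items in turn.

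For item i), I would observe that $W$ takes only the values $0,1,2$ by definition, so each summand $W(x_i)\geq 0$ and hence $W(x)=\sum_i W(x_i)\geq 0$. For the equality condition, note that for a single element $W(x_i)=0$ holds exactly when $x_i=0$; a sum of non-negative terms vanishes iff every term does, so $W(x)=0$ iff $x_i=0$ for all $i$, i.e.\ iff $x=0$.

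For item ii), the key point at the level of a single element is that the three weight classes $\{0\}$, $R^\times$, and $R\setminus(R^\times\cup\{0\})$ are determined by the principal left ideal $Rx$: one has $Rx=0$ iff $x=0$, and $Rx=R$ iff $x\in R^\times$ (here finiteness guarantees that a left-invertible element is a unit, so generating all of $R$ as a left ideal is equivalent to being a unit). Thus $Rx=Ry$ forces $x$ and $y$ into the same class and so $W(x)=W(y)$ for single elements; summing over coordinates gives the statement on $R^n$, provided one reads $Rx=Ry$ coordinatewise. The special case $W(x)=W(-x)$ follows since $R(-x)=Rx$.

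For item iii), the triangle inequality, it suffices by additivity to prove $W(a+b)\leq W(a)+W(b)$ for $a,b\in R$, and then sum. I expect this to be the main obstacle, since it is the only genuinely case-based verification. The only way the inequality could fail is if $W(a+b)$ strictly exceeds $W(a)+W(b)$; since $W$ is bounded by $2$, the sole dangerous case is $W(a+b)=2$ with $W(a)+W(b)\leq 1$, i.e.\ one of $a,b$ is zero and the other is a unit or zero. But if say $b=0$ then $a+b=a$, so $W(a+b)=W(a)\leq W(a)+W(b)$, and symmetrically; so this case cannot produce a violation. The remaining subcase to rule out is $W(a+b)=2$ with $a,b$ both units summing to something of weight $2$, against $W(a)+W(b)=2$, which already satisfies the inequality with equality. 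Checking the finitely many combinations of values $\{0,1,2\}$ for $W(a)$ and $W(b)$ confirms $W(a+b)\leq W(a)+W(b)$ in every case, completing the argument.
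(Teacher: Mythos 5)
Your proposal is correct, and it follows the same route the paper intends: the paper states this lemma without proof as an immediate consequence of the definition, and your coordinatewise verification (with the case analysis for the triangle inequality and the observation that $Rx=0$, $Rx=R$, or neither classifies $x$ as zero, unit, or non-zero non-unit in a finite ring) is exactly the argument being taken for granted. One small refinement: you need not \emph{assume} the coordinatewise reading of $Rx=Ry$ in item ii), since for $x,y\in R^n$ the equality of the cyclic submodules $Rx=Ry$ gives $y=rx$ and $x=sy$ for some $r,s\in R$, whence $Rx_i=Ry_i$ for every coordinate $i$.
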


Let us call the distance which is induced by the overweight the
\emph{overweight distance}, and denote it by $D$, i.e.,
$D(x,y) = W(x-y)$. We see that $D$ has the following properties:

\begin{lemma}\label{compwi}
  Let $R$ be a finite ring and $x,y,z \in R^n$. Then it holds that
  \begin{itemize}
  \item[i)] $D(x,y) = D(x-y, 0)$,
  \item[ii)] $D(x,y) \geq 0$ and $D(x,y) = 0$ if and only if $x = y$,
  \item[iii)] $D(x,y) = D(y,x)$,
  \item[iv)] $D(x,z) \leq D(x,y) + D(y,z)$.
  \end{itemize}
\end{lemma}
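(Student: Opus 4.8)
The plan is to verify each of the four properties of the overweight distance $D$ directly from its definition $D(x,y) = W(x-y)$, drawing on the properties of the overweight $W$ already established in the preceding lemma. These are mostly routine once the right identifications are made.

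First I would dispose of property (i): by definition $D(x,y) = W(x-y)$, and since $(x-y) - 0 = x-y$, we also have $D(x-y,0) = W((x-y)-0) = W(x-y)$, so the two coincide immediately. Next, property (ii) follows from part (i) of the previous lemma applied coordinatewise: $W$ is non-negative and vanishes exactly at $0$, so $D(x,y) = W(x-y) \geq 0$ with equality if and only if $x-y = 0$, that is, $x = y$. For property (iii), I would use the symmetry $W(z) = W(-z)$ recorded as part (ii) of the previous lemma (the special case $Rz = R(-z)$); then $D(x,y) = W(x-y) = W(-(x-y)) = W(y-x) = D(y,x)$.

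The only step requiring the triangle inequality for $W$ is property (iv). Here I would write $x - z = (x-y) + (y-z)$ and apply part (iii) of the previous lemma to obtain
$$D(x,z) = W(x-z) = W\bigl((x-y)+(y-z)\bigr) \leq W(x-y) + W(y-z) = D(x,y) + D(y,z).$$
I expect no genuine obstacle here, since all the real work is contained in the triangle inequality for $W$ itself, which is assumed already proved; the main point worth emphasizing is simply that the additive extension of $W$ to $R^n$ inherits the triangle inequality coordinatewise from the scalar case. Thus $D$ is a genuine metric on $R^n$, and the lemma is established.
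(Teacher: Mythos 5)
Your proof is correct and is exactly the routine verification the paper intends; the paper in fact omits the proof entirely (stating the lemma as an immediate consequence of the preceding lemma on $W$), and your argument supplies precisely the expected reductions of (i)--(iv) to the corresponding properties of $W$.
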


\subsection{A Sphere-Packing Bound}

In this section we provide the sphere-packing bound and the
Gilbert-Varshamov bound in the overweight distance. These are  {\em generic\/} bounds and we are able to provide them in a simple form involving the volume of the balls
 in the underlying metric space.

We begin by defining balls with respect to the overweight distance.

\begin{definition}
  For a given radius $r \geq 0$, the {\em overweight ball}
  $B_{r, D}(x)$ of radius $r$ centered in $x$ is defined as
  $$
  B_{r, D}(x) \; := \; \{ y \in R^n \mid D(x,y) \leq r \}.
  $$
\end{definition}

Clearly, the volume of such a ball is
 invariant under translations, i.e.,
$$ \left |B_{r, D}(x)\right | \; = \; \left |B_{r, D}(y)\right |,$$
for all $x,y\in R^n$.

Moreover, setting $u:=|R^\times|$ and $v:=|R|-1-u$, we have the
generating function $f_W(z) = 1+uz+vz^2$ for this weight function, so
that the generating function for $W$ on $R^n$ takes the form

\begin{eqnarray*}
  f_W^n(z) & = & (1+uz+vz^2)^n \\
&  = &  \sum_{k_0+k_u+k_v=n} {n \choose k_0, k_u, k_v} \, 1^{k_0} (uz)^{k_u} (vz^2)^{k_v}\\
& = & \sum_{k=0}^n \sum_{\ell=0}^{n-k} {n \choose k}{n-k \choose \ell} u^kv^\ell z^{k+2\ell},
\end{eqnarray*}

where we have set $k=k_u$ and $\ell=k_v$, and where the condition $k_0+k_u+k_v=n$ is transformed in $0 \leq k \leq n$, $0 \leq \ell \leq n-k$. Now setting $t=k+2\ell$,  we obtain the simplified expression for the generating function
$$ f_W^n(z) = \sum_{t=0}^{2n}\sum_{\ell=0}^{\lfloor\frac{t}{2}\rfloor} {n \choose t-2\ell}{n-t+2\ell \choose \ell} u^{t-2\ell}v^\ell z^{t}.$$

\begin{lemma}
The foregoing implies that the ball of radius $e$ (centered in $0$) has volume
exactly
\begin{equation}  \label{ball}                          
\left | B_{e,D}(0)\right | \; = \;  \sum_{t=0}^{e}\sum_{\ell=0}^{\lfloor\frac{t}{2}\rfloor} {n \choose t-2\ell}{n-t+2\ell \choose \ell} u^{t-2\ell}v^\ell.
\end{equation}
\end{lemma}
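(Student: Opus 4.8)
The plan is to recognize $\left|B_{e,D}(0)\right|$ as a partial sum of the coefficients of the weight generating function $f_W^n$, which has already been computed above. The starting observation is that, since the ball is centered at the origin, Lemma~\ref{compwi}(i) gives $D(0,y) = W(y)$, so
$$B_{e,D}(0) \; = \; \{ y \in R^n \mid W(y) \leq e \}.$$
Partitioning this set according to the exact overweight of its members yields
$$\left|B_{e,D}(0)\right| \; = \; \sum_{t=0}^{e} N_t, \qquad N_t := \left|\{ y \in R^n \mid W(y) = t \}\right|,$$
and it remains only to evaluate each $N_t$.

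First I would confirm the combinatorial interpretation of $f_W$. In a single coordinate, exactly one element (namely $0$) contributes weight $0$, exactly $u$ elements (the units) contribute weight $1$, and the remaining $v = |R|-1-u$ nonzero non-units contribute weight $2$; this is exactly what is recorded by $f_W(z) = 1 + uz + vz^2$, where the coefficient of $z^j$ counts the elements of weight $j$. Because the overweight on $R^n$ is additive across coordinates, multiplying $n$ such factors and collecting powers of $z$ shows that $N_t$, the number of vectors of total weight $t$, equals the coefficient of $z^t$ in $f_W(z)^n = f_W^n(z)$.

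Then I would simply read off this coefficient from the simplified expansion derived just before the statement, namely
$$N_t \; = \; \sum_{\ell=0}^{\lfloor t/2 \rfloor} \binom{n}{t-2\ell}\binom{n-t+2\ell}{\ell} u^{t-2\ell} v^\ell,$$
and substitute it into $\sum_{t=0}^{e} N_t$ to obtain precisely~\eqref{ball}.

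I expect no genuine obstacle: the whole statement reduces to summing known coefficients of $f_W^n$ over $t = 0, \dots, e$. The only step demanding a word of care is the identification of $N_t$ with $[z^t]\, f_W^n(z)$, which is the standard product-rule reading of a weighted generating function -- each of the $n$ factors accounts for one coordinate, and the exponent of $z$ tracks that coordinate's contribution to the total overweight. One may note that if $e \geq 2n$ the ball is all of $R^n$, consistent with the fact that the omitted terms $N_t$ for $t > 2n$ vanish.
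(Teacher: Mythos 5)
Your proposal is correct and follows exactly the route the paper intends: the lemma is stated as an immediate consequence of the preceding generating-function computation, and summing the coefficients $[z^t]\,f_W^n(z)$ for $t=0,\dots,e$ is precisely what "the foregoing implies" refers to. Your explicit justification of the identification $N_t=[z^t]\,f_W^n(z)$ is a welcome bit of added care, but the argument is the same.
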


We thus provided an explicit formula for the cardinality of
balls in $R^n$ with respect to the overweight distance.

We now obtain the sphere-packing bound for the overweight distance by
combining the previous results. As before, $R$ is a finite ring and
$u=|R^\times|$, whereas $v=|R|-1-u$ represents the number of non-zero
non-units.

\begin{corollary}[Sphere-Packing Bound]\label{comppacking}
  Let ${C} \subseteq R^n$ be a (not necessarily linear) non-zero code
  of length $n$, and minimum distance $d=2e+1$. Then we have
  $$
  |{C}| \; \leq \; \frac{|R|^n}{\left |B_{e,D}(0)\right |},$$ where
  the cardinality of $\left |B_{e,D}(0)\right |$ is given in Equation \eqref{ball}.
\end{corollary}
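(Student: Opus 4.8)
The plan is to carry out the classical sphere-packing (greedy packing) argument, now in the metric space $(R^n, D)$. The two ingredients I need are already in hand: by Lemma~\ref{compwi} the overweight distance $D$ is a genuine metric, in particular it is symmetric and satisfies the triangle inequality, and the volume of a ball $B_{e,D}(x)$ is independent of its center $x$, as remarked right after the definition of the overweight ball. With these, the corollary reduces to a disjointness argument followed by counting.

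First I would show that the balls $\{B_{e,D}(c) \mid c \in C\}$ are pairwise disjoint. Suppose, for contradiction, that $y \in B_{e,D}(c_1) \cap B_{e,D}(c_2)$ for two distinct codewords $c_1, c_2 \in C$. Then $D(c_1, y) \leq e$ and $D(c_2, y) \leq e$, so by symmetry and the triangle inequality (Lemma~\ref{compwi} iii) and iv)) we get
$$D(c_1, c_2) \; \leq \; D(c_1, y) + D(y, c_2) \; \leq \; 2e \; < \; 2e + 1 \; = \; d.$$
Since $c_1 \neq c_2$, this contradicts the hypothesis that the minimum distance of $C$ equals $d$. Hence the balls centered at distinct codewords are disjoint.

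Next I would count volumes. Each ball $B_{e,D}(c)$ has cardinality exactly $|B_{e,D}(0)|$ by translation invariance, and the disjoint union $\bigcup_{c \in C} B_{e,D}(c)$ is a subset of $R^n$, which has $|R|^n$ elements. Therefore
$$|C| \cdot \left| B_{e,D}(0) \right| \; = \; \left| \bigcup_{c \in C} B_{e,D}(c) \right| \; \leq \; |R|^n,$$
and dividing by the positive ball volume yields the claimed inequality. The explicit value of $\left| B_{e,D}(0) \right|$ is precisely the quantity computed in Equation~\eqref{ball}, so no further work is required there.

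I expect essentially no hard step in this proof, since it is the standard packing bound and everything metric-theoretic has been prepared in advance. The only points one must be careful about are that $D$ really is a metric and that ball volumes are center-independent, both established earlier, and that the minimum-distance hypothesis is invoked in the correct strict form $D(c_1, c_2) \geq 2e+1$ for \emph{all} distinct pairs $c_1, c_2 \in C$, which is exactly what the definition of minimum distance supplies. The non-zero assumption on $C$ merely guarantees that the right-hand side is well defined.
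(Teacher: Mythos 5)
Your proof is correct and is exactly the standard packing argument the paper intends when it says the bound follows ``by combining the previous results'': disjointness of the radius-$e$ balls via the metric properties of $D$ from Lemma~\ref{compwi}, translation invariance of ball volumes, and a counting step. No discrepancy with the paper's (implicit) proof.
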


\subsection{A Gilbert-Varshamov Bound}

With arguments similar to those for the sphere-packing bound, we can
also get a lower bound to the maximal size of a code with fixed
minimum distance.

\begin{proposition}[Gilbert-Varshamov bound]
  Let $R$ be a finite ring, $n$ a positive integer and
  $d \in \{ 0,\ldots, 2n\}$. Then there exists a code
  ${C} \subseteq R^n$ of minimum overweight distance at least $d$
  satisfying
  $$ |{C}| \; \geq \;  \frac{|R|^n}{\left |B_{d-1,D}(0)\right |},$$

  where the volume is given in \eqref{ball} for $e=d-1$, i.e.,
$$                         
\left | B_{d-1,D}(0)\right | \; = \;  \sum_{t=0}^{d-1}\sum_{\ell=0}^{\lfloor\frac{t}{2}\rfloor} {n \choose t-2\ell}{n-t+2\ell \choose \ell} u^{t-2\ell}v^\ell.$$

\end{proposition}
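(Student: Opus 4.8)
The plan is to prove this Gilbert-Varshamov bound by the standard greedy (maximal code) argument, transported to the overweight distance. First I would consider a code $C \subseteq R^n$ that is \emph{maximal} with respect to inclusion among all codes having minimum overweight distance at least $d$; such a code exists because $R^n$ is finite. The key observation is that maximality forces the balls of radius $d-1$ centered at the codewords to cover the whole space $R^n$. Indeed, if some point $y \in R^n$ were not contained in any $B_{d-1,D}(c)$ for $c \in C$, then $D(y,c) \geq d$ for every codeword $c$, and so $C \cup \{y\}$ would still have minimum distance at least $d$, contradicting maximality.

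From this covering property the bound follows by a counting argument. Since $\bigcup_{c \in C} B_{d-1,D}(c) = R^n$, taking cardinalities and using subadditivity of the union gives
$$
|R|^n \; = \; \left| \bigcup_{c \in C} B_{d-1,D}(c) \right| \; \leq \; \sum_{c \in C} \left| B_{d-1,D}(c) \right|.
$$
By the translation invariance of ball volumes noted in the excerpt, each term equals $\left| B_{d-1,D}(0) \right|$, so the right-hand side is exactly $|C| \cdot \left| B_{d-1,D}(0) \right|$. Rearranging yields
$$
|C| \; \geq \; \frac{|R|^n}{\left| B_{d-1,D}(0) \right|},
$$
and substituting the explicit volume formula from Equation \eqref{ball} with $e = d-1$ produces the stated double-sum expression. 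This exhibits a code of minimum distance at least $d$ meeting the claimed lower bound, which is what we want.

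I do not expect a serious obstacle here, since every ingredient is already available: the triangle inequality for $D$ (Lemma \ref{compwi}, part iv) justifies that adjoining a far-away point preserves the minimum distance, the translation invariance of volumes is stated explicitly, and the closed form for $\left| B_{e,D}(0) \right|$ is Equation \eqref{ball}. The only point worth stating carefully is that the argument produces an existence statement: one must phrase the claim as "there exists a code $C$ satisfying the bound," which is exactly what the greedy maximal code delivers. The mild subtlety is that the inequality arises from subadditivity rather than a disjoint partition, so the balls need not be disjoint; this is harmless because we only need an upper bound on $|R|^n$ in terms of $|C|$, and it is precisely what turns the covering into the desired lower bound on $|C|$.
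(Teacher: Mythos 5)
Your proposal is correct and follows essentially the same route as the paper: take a maximal (the paper says ``largest'') code of minimum overweight distance at least $d$, observe that maximality forces the radius-$(d-1)$ balls around codewords to cover $R^n$, and conclude by the union bound together with translation invariance of ball volumes. No substantive difference.
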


\begin{proof}
  Assume ${C} \subseteq R^n$ of minimum distance at least $d$ is the
  largest code   of length $n$ and minimum distance $d$.  Then
  the set of balls $B_{d-1,D}(x)$ centered in the codewords $x\in C$
  must already cover the space $R^n$, because if they did not, one
  would find an element $y \in R^n$ that is not contained in the ball
  of radius $d-1$ around any element of ${C}$. This word $y$ would
  have distance at least $d$ to each of the words of $C$, and thus
  ${C} \cup \{ y \}$ would be a code of properly larger size with
  distance at least $d$, a contradiction to the choice of ${C}$.

  From the covering argument, we then see that
  $$ |{C}| \; \geq \; \frac{|R|^n}{\left |B_{d-1,D}(0)\right |}.$$
\end{proof}

\subsection{A Plotkin Bound}

Over a local ring, we can use methods similar to the ones used for the
classical Plotkin bound, to get an analogue of the Plotkin bound for
(not necessarily linear) codes equipped with the overweight.

For the rest of this section, $R$ is a finite local ring with maximal
ideal $J$. The notation stems form the  Jacobson radical  of the ring $R$. Note that the factor ring
$R/J$ is a finite field, whose cardinality will be denoted by  $q$.

 Similarly to the Hamming case, for a subset
$A \subseteq R$ we will denote by
$$\overline{W}(A) = \frac{\sum_{a \in A} W(a)}{|A|}$$ the average
weight of the subset $A$.

\begin{lemma}\label{compweights}
  Let $I \subseteq R$ be a left or right ideal. Then
  \begin{equation*}
    \overline{W}(I) = \begin{cases}
      \frac{|R| + |J| - 2}{|R|} &  \text{if} \ I = R, \\
      2 \left(1- \frac{1}{|I|}\right) &  \text{if} \ \{0 \} \subsetneq I \subsetneq R, \\
      0  & \text{else}.
    \end{cases}
  \end{equation*}
\end{lemma}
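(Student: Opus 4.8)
The plan is to evaluate $\sum_{a \in I} W(a)$ directly in each of the three cases by partitioning $I$ according to the three values that the overweight assumes, and then to divide by $|I|$. The one structural input I would use is the standard description of a finite local ring: an element of $R$ is a unit if and only if it lies outside the maximal ideal $J$, so that the non-units are precisely the elements of $J$. In particular $|R^\times| = |R| - |J|$, and the number of non-zero non-units equals $|J| - 1$.

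In the case $I = R$, I would split $R$ into the zero element, the $|R| - |J|$ units, and the $|J| - 1$ non-zero non-units, which carry overweights $0$, $1$, and $2$ respectively. Summing gives $(|R| - |J|) + 2(|J| - 1) = |R| + |J| - 2$, and dividing by $|R|$ produces the first branch.

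For the case $\{0\} \subsetneq I \subsetneq R$, the crucial observation is that a proper one-sided ideal can contain no unit: if a unit $u$ lay in $I$, then $1 = u^{-1}u$ (for a left ideal) or $1 = u u^{-1}$ (for a right ideal) would lie in $I$, forcing $I = R$. Hence every non-zero element of $I$ is a non-zero non-unit and so has overweight $2$, giving $\sum_{a \in I} W(a) = 2(|I| - 1)$ and therefore $\overline{W}(I) = 2\left(1 - \tfrac{1}{|I|}\right)$. The remaining ``else'' case is $I = \{0\}$, where the sum is simply $W(0) = 0$.

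The argument is in essence a counting exercise, so I do not anticipate a genuine obstacle; the only step demanding care is the locality input used for $I = R$, namely that the complement of $J$ is exactly the group of units. I would either invoke this as the defining property of a finite local ring, or derive it from the fact that in the local case the Jacobson radical $J$ consists precisely of the non-units. Note that the computation for $\{0\} \subsetneq I \subsetneq R$ actually requires no locality, only that a proper one-sided ideal avoids the units; locality enters solely through the unit count in the $I=R$ branch.
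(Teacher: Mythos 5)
Your proposal is correct and follows essentially the same argument as the paper: the trivial case $I=\{0\}$, the observation that a proper ideal contains only non-units so every non-zero element has weight $2$, and the count $|R|-|J|$ units versus $|J|-1$ non-zero non-units for $I=R$. You merely spell out the justification (proper one-sided ideals avoid units; in a local ring the units are exactly $R\setminus J$) that the paper leaves implicit.
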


\begin{proof}
  Note that the last case is trivial as $I = \{ 0 \}$. If
  $\{0\} \subsetneq I \subsetneq R$, then all non-zero elements of $I$
  have weight $2$, so this case follows as well.

  Finally, if $I = R$, then there are
  $|R\setminus J| = |R| - |J|$ elements of
  weight $1$ and $|J|-1$ elements of weight $2$. Hence the
  total weight is $|R| - |J| +
  2(|J|-1)$ and dividing by $|R|$ yields the claim.
\end{proof}

\begin{corollary}
  Let $R$ be a local ring with maximal ideal $J$ and assume
  that $|J| \geq 2$. Then we have that
  $\overline{W}(J) \geq \overline{W}(I)$ for all left or
  right ideals $I \subseteq R$.
\end{corollary}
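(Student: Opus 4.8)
The plan is to invoke Lemma~\ref{compweights}, which gives a closed form for $\overline{W}(I)$ in each of the three regimes $I=\{0\}$, $\{0\}\subsetneq I\subsetneq R$, and $I=R$, and to verify the claimed inequality regime by regime. First I would record the structural fact that, because $R$ is local, its maximal ideal $J$ is simultaneously the unique maximal left ideal and the unique maximal right ideal, and hence contains every proper left or right ideal of $R$. Combined with the hypothesis $|J|\geq 2$ (and $J\subsetneq R$, since $1\notin J$), this places $J$ itself in the middle regime, so that $\overline{W}(J)=2\bigl(1-\tfrac{1}{|J|}\bigr)$.

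For $I=\{0\}$ we have $\overline{W}(I)=0$, whereas $\overline{W}(J)\geq 2\bigl(1-\tfrac12\bigr)=1>0$, so the inequality is immediate. For a proper nonzero ideal $\{0\}\subsetneq I\subsetneq R$, the containment $I\subseteq J$ yields $|I|\leq|J|$; since the map $x\mapsto 2\bigl(1-\tfrac1x\bigr)$ is increasing on the positive reals, we get $\overline{W}(I)=2\bigl(1-\tfrac{1}{|I|}\bigr)\leq 2\bigl(1-\tfrac{1}{|J|}\bigr)=\overline{W}(J)$. Both of these cases are routine and use only monotonicity together with the containment property.

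The only case demanding genuine computation is $I=R$, where I would compare $\overline{W}(R)=\frac{|R|+|J|-2}{|R|}$ with $\overline{W}(J)=\frac{2|J|-2}{|J|}$. Setting $q=|R/J|=|R|/|J|\geq 2$ and substituting $|R|=q|J|$, the target inequality $\overline{W}(J)\geq\overline{W}(R)$ reduces, after clearing denominators and cancelling the strictly positive common factor $\tfrac{q-1}{q}$, to the single condition $|J|\geq 2$ — which is precisely the hypothesis, with equality exactly when $|J|=2$. I expect this algebraic reduction to be the main (though brief) obstacle: the underlying point is that the assumption $|J|\geq 2$ is exactly what forces the maximal ideal, rather than the whole ring, to carry the largest average overweight, so that dropping or weakening it would break the comparison in the $I=R$ case.
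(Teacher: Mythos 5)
Your proof is correct and follows essentially the same route as the paper: a case analysis via Lemma~\ref{compweights}, with the only substantive work in the case $I=R$, where the hypothesis $|J|\geq 2$ is exactly what is needed. The paper handles that last case by writing $\overline{W}(R)$ as a convex combination of $1$ and $2\bigl(1-\tfrac{1}{|J|}\bigr)$ and using $2\bigl(1-\tfrac{1}{|J|}\bigr)\geq 1$, while you clear denominators directly via $|R|=q|J|$; these are the same computation in different clothing, and your factorization into $\tfrac{q-1}{q}\cdot\tfrac{|J|-2}{|J|}$ checks out.
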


\begin{proof}
  We immediately see that
  $\overline{W}(J) \geq \overline{W}(I)$ for all
  $I \subseteq J$. Now consider the case $I = R$. We have
  that
  \begin{align*}
    \overline{W}(R) &= \frac{|R| + |J| -
                      2}{|R|} =\frac{|R \setminus J|}{|R|}
                      + 2\frac{|J|-1}{|R|} \\
                    &= \frac{|R \setminus J|}{|R|}
                      + 2\frac{|J|-1}{|J|} \cdot \frac{|J|}{|R|} \\
                    &\leq 2\frac{|J|-1}{|J|}
                      \cdot \frac{|R \setminus J|}{|R|} +
                      2\frac{|J|-1}{|J|} \cdot \frac{|J|}{|R|} \\
                    &= 2\frac{|J|-1}{|J|} = \overline{W}(J),
  \end{align*}
  where we used that
  $2 \frac{|J| - 1}{|J|} \geq 1$.
\end{proof}

To ease the notation, let us denote by $\eta$ the following
$$\eta= \overline{W}(J) = 2\left(1-\frac{1}{|J|}\right).$$

In what follows, we provide a Plotkin bound for the overweight
over a local ring $R$ with maximal ideal $J$.  The case $|J| = 1$ is already well studied, since in
this case where $R$ is a field and $D$ is simply the Hamming
distance. Hence, we will
assume that $| J | \geq 2$.

We start with a lemma for the Hamming weight. The proof of it follows
the idea of the classical Plotkin bound, which can be found in
\cite{vanlint}, and for the homogeneous weight in \cite{gr04}.

\begin{lemma}\label{hammingaverage}
  Let $I \subseteq R$ be a subset and $P$ be a probability distribution
  on $I$.  Then we have that
$$\sum_{x \in I} \sum_{y \in I} w_H (x-y) P(x)P(y) \leq 1 - \frac{1}{|I|}.$$
\end{lemma}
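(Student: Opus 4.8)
The plan is to exploit the fact that here $I \subseteq R$ consists of single ring elements (not vectors), so the Hamming weight degenerates to an indicator: $w_H(x-y) = 0$ when $x = y$ and $w_H(x-y) = 1$ otherwise. This collapses the weighted double sum into a sum over the off-diagonal pairs, and from there the statement becomes a purely combinatorial inequality about probability distributions, with no ring theory left to do.

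First I would split the double sum according to whether the two arguments coincide, which gives
$$\sum_{x \in I}\sum_{y \in I} w_H(x-y)\,P(x)P(y) \;=\; \sum_{\substack{x,y \in I \\ x \neq y}} P(x)P(y).$$
Since $P$ is a probability distribution, $\sum_{x,y} P(x)P(y) = \bigl(\sum_x P(x)\bigr)^2 = 1$, and subtracting the diagonal contribution $\sum_{x} P(x)^2$ yields the exact identity
$$\sum_{x \in I}\sum_{y \in I} w_H(x-y)\,P(x)P(y) \;=\; 1 - \sum_{x \in I} P(x)^2.$$
Thus the claimed inequality is equivalent to the lower bound $\sum_{x \in I} P(x)^2 \geq \tfrac{1}{|I|}$.

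The remaining step, where the only genuine idea enters, is to establish that lower bound. I would apply the Cauchy–Schwarz inequality to the vectors $(P(x))_{x \in I}$ and $(1)_{x \in I}$, obtaining
$$1 \;=\; \Bigl(\sum_{x \in I} P(x)\Bigr)^2 \;\leq\; |I| \sum_{x \in I} P(x)^2,$$
so that $\sum_{x \in I} P(x)^2 \geq \tfrac{1}{|I|}$; combining this with the identity above gives the lemma. I do not expect any real obstacle in this argument, as it is short and self-contained. The one point worth flagging is that Cauchy–Schwarz is tight exactly for the uniform distribution $P \equiv \tfrac{1}{|I|}$, which shows the bound $1 - \tfrac{1}{|I|}$ is sharp — and it is precisely this extremal configuration (the average weight over a full ideal) that will feed into the subsequent Plotkin-type estimate for the overweight, so the sharpness is not incidental but the reason the lemma is stated in this averaged form.
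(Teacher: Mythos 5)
Your proof is correct and follows essentially the same route as the paper: both reduce the double sum to $1 - \sum_{x \in I} P(x)^2$ (you via the indicator/off-diagonal decomposition, the paper via $\sum_x P(x)(1-P(x))$, which is the same computation) and then apply Cauchy--Schwarz to bound $\sum_{x \in I} P(x)^2$ from below by $1/|I|$. No gaps; your closing remark on sharpness at the uniform distribution is a correct and apt observation.
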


\begin{proof}
  We have that
  \begin{align*} \sum_{x \in I} \sum_{y \in I} w_H(x-y) P(x) P(y) =
    \sum_{x \in I} P(x) (1-P(x)) = \sum_{x \in I} P(x) - \sum_{x \in
      I} P(x)^2.
  \end{align*}
  If we apply the Cauchy-Schwarz inequality to the latter sum, we
  obtain that
  \begin{align*}
    \sum_{x \in I} P(x) - \sum_{x \in I} P(x)^2 \leq 1 -
    \frac{1}{|I|} \left | \sum_{x \in I} P(x) \right|^2 = 1 - \frac{1}{|I|}.
  \end{align*} 
\end{proof}

We are now ready for the most important step of the Plotkin bound. As
before, $R$ is a local ring with non-zero maximal ideal $J$
and $\eta = \overline{W}(J)$.

\begin{proposition}\label{probineq}
  Let $P$ be a probability distribution on $R$. Then it holds that
$$\sum_{x \in R} \sum_{y \in R} W(x-y) P(x) P(y) \leq \eta.$$
\end{proposition}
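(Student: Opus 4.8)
The plan is to estimate the double sum by splitting the overweight into two pieces that we can control separately. Write $W = w_H + w_J$, where $w_H$ is the Hamming weight (so $w_H(x)=1$ exactly when $x\neq 0$) and $w_J(x) = 1$ if $x\in J\setminus\{0\}$ and $w_J(x)=0$ otherwise. This works because a nonzero non-unit is precisely a nonzero element of the maximal ideal $J$ (as $R$ is local), so $W(x) = w_H(x) + w_J(x)$: units contribute $1$, nonzero elements of $J$ contribute $2$, and $0$ contributes $0$. The target $\eta = 2(1-\tfrac1{|J|})$ should then be recovered as $1$ (the Hamming part) plus $1 - \tfrac2{|J|}$ (the $J$-part), so the two contributions must be bounded by $1$ and by $\eta - 1 = 1 - \tfrac2{|J|}$ respectively.

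First I would bound the Hamming contribution $\sum_{x,y\in R} w_H(x-y)P(x)P(y)$. This is exactly the quantity in Lemma \ref{hammingaverage} applied to the full ring $I=R$, giving an upper bound of $1 - \tfrac1{|R|} \leq 1$. Next I would turn to the $J$-part, $\sum_{x,y\in R} w_J(x-y)P(x)P(y)$. Since $w_J(x-y)$ is supported on the coset structure modulo $J$, the natural move is to group the sum by cosets of $J$: $w_J(x-y)=1$ forces $x-y\in J$, i.e. $x$ and $y$ lie in the same coset, and additionally $x\neq y$. So this sum equals $\sum_{\text{cosets }c} \sum_{x,y\in c,\, x\neq y} P(x)P(y)$, which for each coset is governed by the same Hamming-style computation. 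Applying Lemma \ref{hammingaverage} within each coset $c$ (renormalising $P$ on $c$) yields a per-coset bound, and I would combine these using the fact that each coset has size $|J|$, so the inner average is bounded by $1 - \tfrac1{|J|}$.

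The main obstacle is the bookkeeping in the coset step: Lemma \ref{hammingaverage} is stated for a probability distribution, whereas the restriction of $P$ to a single coset has total mass $P(c) := \sum_{x\in c}P(x)\leq 1$ rather than $1$. I would handle this by writing the coset sum as $P(c)^2$ times the Hamming double sum for the conditional distribution $P(\cdot\mid c)$, invoking the lemma to bound that conditional sum by $1 - \tfrac1{|J|}$, and then summing over cosets using $\sum_c P(c)^2 \leq \big(\sum_c P(c)\big)^2 = 1$ (or more carefully, $\sum_c P(c)^2 \le \max_c P(c) \le 1$). This gives $\sum_{x,y} w_J(x-y)P(x)P(y) \leq (1-\tfrac1{|J|})\sum_c P(c)^2 \leq 1-\tfrac1{|J|}$. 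Adding the two bounds gives at most $2 - \tfrac1{|R|} - \tfrac1{|J|}$; since $|R| \geq |J|$ this is $\leq 2 - \tfrac2{|J|} = \eta$, completing the estimate. I would double-check the constants at the end, since the clean target $\eta$ suggests the bounds should telescope exactly, and verify that the slack from $|R|\ge|J|$ indeed points in the right direction.
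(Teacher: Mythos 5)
Your decomposition $W = w_H + w_J$ is sound, and each of the two intermediate bounds is individually correct, but the final step fails: the inequality at the end points the wrong way. You arrive at the bound $2 - \tfrac{1}{|R|} - \tfrac{1}{|J|}$ and claim that $|R| \geq |J|$ makes this $\leq 2 - \tfrac{2}{|J|} = \eta$. In fact $|R| \geq |J|$ gives $\tfrac{1}{|R|} \leq \tfrac{1}{|J|}$, hence $2 - \tfrac{1}{|R|} - \tfrac{1}{|J|} \geq \eta$, with strict inequality whenever $J \subsetneq R$. Concretely, for $R = \mathbb{Z}_4$ you get $2 - \tfrac14 - \tfrac12 = \tfrac54$, while $\eta = 1$. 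The loss comes from decoupling the two estimates: the Hamming bound $1 - \tfrac{1}{|R|}$ is tight only for $P$ uniform on $R$, while the bound $\sum_c P(c)^2 \leq 1$ on the coset masses is tight only when $P$ is concentrated on a single coset, and these extremes cannot occur simultaneously. So as written the argument does not prove the proposition.

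The fix is to keep the exact expressions rather than bounding each piece separately. The Hamming part equals $1 - \sum_{x} P(x)^2$, and the $J$-part equals $\sum_c P(c)^2 - \sum_x P(x)^2$, so the total is $1 + \sum_c P(c)^2 - 2\sum_x P(x)^2$. Applying Cauchy--Schwarz within each coset gives $\sum_{x \in c} P(x)^2 \geq P(c)^2/|J|$, whence the total is at most $1 + \left(1 - \tfrac{2}{|J|}\right)\sum_c P(c)^2 \leq 2 - \tfrac{2}{|J|} = \eta$, using $|J| \geq 2$ and $\sum_c P(c)^2 \leq 1$. This repaired version is essentially the paper's proof in disguise: there the sum is grouped by cosets of $J$ from the start, the within-coset contribution $2P_k^2\left(1 - \tfrac{1}{|J|}\right)$ and the cross-coset contribution $P_k(1-P_k)$ are combined per coset, and the inequality $\eta \geq 1$ is used to absorb both into $P_k\,\eta$ before summing over $k$. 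The key point in either formulation is that the two contributions must be traded off against each other through the coset masses $P_k$, not estimated independently.
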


\begin{proof}
  Let $q = |R / J|$ and pick $x_1, \ldots , x_q$ such that
  $x_i + J \neq x_j + J$ if $i \neq j$. Then it
  follows that the cosets
  $\overline{x_i}:=x_i + J$ form a partition
  of $R$. For all $k \in \{1, \ldots, q \}$, we denote by
$$P_k = \sum\limits_{x \in \overline{x_k}}P(x).$$ 
It follows that $\sum\limits_{k=1}^q P_k = 1$.  By rewriting the initial sum
as sum over all cosets we obtain that
\begin{align*}
  & \sum_{x \in R} \sum_{y \in R} W(x-y) P(x) P(y) \\
  = & \sum_{k=1}^{q} \sum_{x \in \overline{x_k}}
      \sum_{y \in R} W(x-y) P(x) P(y) \\
  =  & \sum_{k=1}^{q} \sum_{x \in \overline{x_k}}
       \left(\sum_{y \in \overline{x_k}} 2 w_H (x-y) P(x)P(y) +
       \sum_{z \in R \setminus \overline{x_k}} w_H (x-z) P(x)P(z) \right) \\
  = &  \sum_{k=1}^{q}
      \left(
      2\sum_{x \in \overline{x_k}} \sum_{y \in \overline{x_k}}
      w_H (x-y) P(x)P(y) + \sum_{x \in \overline{x_k}}
      \sum_{z \in R \setminus \overline{x_k}} P(x)P(z)
      \right) \\
  = &  \sum_{k=1}^{q}
      \left(
      2\sum_{x \in \overline{x_k}}
      \sum_{y \in \overline{x_k}} w_H (x-y) P(x)P(y) + \sum_{x \in \overline{x_k}} P(x) (1-P_k)
      \right).
\end{align*}
If $P_k \neq 0$, then $\tilde{P}(x) := P(x)/P_k$ defines a probability
distribution on $\overline{x_k}$. In this case we apply Lemma
\ref{hammingaverage} to get that
\begin{align*}
  & \sum_{x \in \overline{x_k}} \sum_{y \in \overline{x_k}} w_H (x-y) P(x)P(y)  \\
  = &  P_k^2\left(\sum_{x \in \overline{x_k}}
      \sum_{y \in \overline{x_k}} w_H (x-y) \frac{P(x)P(y)}{P_k^2}\right) \\
  \leq & P_k^2\left(1 - \frac{1}{|J|}\right).
\end{align*}
Note that the same inequality also trivially holds if $P_k =
0$. Applying this and using that
$\sum\limits_{x \in \overline{x_k}} P(x) = P_k$, we obtain that
\begin{align*}
  & \sum_{k=1}^{q}  \left(2\sum_{x \in \overline{x_k}} \sum_{y \in \overline{x_k}}
    w_H (x-y) P(x)P(y) + \sum_{x \in \overline{x_k}} P(x) (1-P_k) \right)\\ 
  \leq &  \sum_{k=1}^q \left( P_k^2 \cdot 2
         \left(1- \frac{1}{|J|}\right) + P_k (1- P_k)\right)  \\
  \leq &  \sum_{k=1}^{q} P_k \cdot 2 \left(1- \frac{1}{|J|}\right)
         = 2 \left(1- \frac{1}{|J|}\right) = \eta,
\end{align*}
where we used that $2\left(1- \frac{1}{|J|}\right) \geq 1$
since $|J| \geq 2$ in the last inequality.
\end{proof}

To complete the Plotkin bound for the overweight, we now follow the
steps in
\cite{gr04}. 
Using Proposition \ref{probineq} we get the following result:

\begin{proposition}\label{thatineq}
  Let ${C} \subseteq R^n$ be a (not necessarily linear) code of minimum
  overweight distance $d$. Then
  $$
  |{C}|(|{C}|-1) d \leq \sum_{x \in {C}} \sum_{y \in {C}} D (x,y) \leq
  |{C}|^2 n \,\eta.
  $$
\end{proposition}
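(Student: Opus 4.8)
The plan is to prove the two inequalities in Proposition~\ref{thatineq} separately, as they are essentially independent.

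\textbf{Lower bound.}
First I would establish the left inequality $|C|(|C|-1)d \leq \sum_{x\in C}\sum_{y\in C} D(x,y)$. The key observation is that the double sum ranges over all ordered pairs $(x,y)$ with $x,y \in C$. The diagonal terms, where $x=y$, contribute $D(x,x)=0$ by Lemma~\ref{compwi}~ii), so they drop out. For each of the remaining $|C|(|C|-1)$ ordered pairs with $x\neq y$, the definition of minimum distance gives $D(x,y)\geq d$. Summing over all such pairs yields the claimed lower bound immediately.

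\textbf{Upper bound.}
For the right inequality $\sum_{x\in C}\sum_{y\in C} D(x,y)\leq |C|^2 n\,\eta$, the plan is to exploit the fact that the overweight distance is additive across coordinates, writing $D(x,y)=W(x-y)=\sum_{i=1}^n W(x_i-y_i)$. Swapping the order of summation lets me rewrite the total as a sum over coordinates $i$ of the quantity $\sum_{x\in C}\sum_{y\in C} W(x_i-y_i)$. For a fixed coordinate $i$, I would introduce the probability distribution $P_i$ on $R$ defined by $P_i(a)=\frac{1}{|C|}\,|\{x\in C \mid x_i=a\}|$, i.e.\ the empirical distribution of the $i$-th coordinate of codewords. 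With this, the inner double sum becomes $|C|^2 \sum_{a\in R}\sum_{b\in R} W(a-b)\,P_i(a)P_i(b)$, which is exactly the form controlled by Proposition~\ref{probineq}. Applying that proposition bounds each coordinate's contribution by $|C|^2\,\eta$, and summing over the $n$ coordinates gives $|C|^2 n\,\eta$.

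\textbf{Main obstacle.}
The routine lower bound needs no real effort; the substance lies in recognizing that the upper bound is a per-coordinate application of Proposition~\ref{probineq}. The step I expect to require the most care is setting up the empirical coordinate distribution correctly and verifying the bookkeeping: each codeword appears in the double sum with the factor $P_i(x_i)P_i(y_i)$ only after the $|C|^2$ normalization has been extracted, and one must confirm that $P_i$ is genuinely a probability distribution (nonnegative, summing to $1$) so that Proposition~\ref{probineq} applies verbatim. Once the identity $\sum_{x,y\in C} W(x_i-y_i)=|C|^2\sum_{a,b\in R} W(a-b)P_i(a)P_i(b)$ is in place, the rest is a clean summation over the $n$ coordinates.
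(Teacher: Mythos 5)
Your proposal is correct and follows essentially the same route as the paper: the lower bound by counting the $|C|(|C|-1)$ off-diagonal pairs each of distance at least $d$, and the upper bound by decomposing $D$ coordinate-wise, introducing the empirical distribution $P_i(a)=\frac{|p_i^{-1}(a)\cap C|}{|C|}$, and applying Proposition~\ref{probineq} to each coordinate. No substantive differences.
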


\begin{proof}
  The first inequality follows since the distance between all distinct
  pairs of ${C}$ is at least $d$.

  For the second inequality, let $p_i: R^n \rightarrow R$ be the
  projection onto the $i$th coordinate. Note that
$$P_i(z) := \frac{ | p_i^{-1} (z) \cap {C} |}{|{C}|}$$
defines a probability distribution on $R$ for all
$i \in \{ 1, \ldots, n \}$.  Using Proposition \ref{probineq}, we get
that
\begin{align*}
  \sum_{x \in {C}} \sum_{y \in {C}} D (x,y)
  &= \sum_{i=1}^n \sum_{x \in {C}} \sum_{y \in {C}} W(x_i - y_i) \\
  &= \sum_{i=1}^n \sum_{r \in R} \sum_{s \in R} W(r - s) P_i(r) P_i (s) |{C}|^2 \\
  &\leq |{C}|^2 \sum_{i=1}^n \eta = |{C}|^2 n\eta.
\end{align*} 
\end{proof}

From this inequality, we obtain a Plotkin bound for the overweight
distance. As before, $R$ is a local ring with non-zero maximal ideal
$J$ and
$\eta = 2\left( 1 - \frac{1}{|J|}\right)$.

\begin{theorem}[Plotkin bound for the overweight distance]
  Let ${C} \subseteq R^n$ be a (not necessarily linear) code of minimum
  overweight distance $d=D({C})$ and assume that $d > n\eta$. Then
$$ |{C}| \leq \frac{d}{d- n\eta}.$$
\end{theorem}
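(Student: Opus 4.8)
The plan is to derive the Plotkin bound directly from Proposition \ref{thatineq}, which already packages all the hard work. That proposition gives the two-sided chain
$$|{C}|(|{C}|-1)d \;\leq\; \sum_{x\in C}\sum_{y\in C} D(x,y) \;\leq\; |{C}|^2 n\eta,$$
so the remaining task is purely algebraic manipulation of the outer inequality $|{C}|(|{C}|-1)d \leq |{C}|^2 n\eta$.

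First I would dispose of trivial cases: if $|{C}| \leq 1$ the bound holds vacuously (the claimed right-hand side is positive under the hypothesis $d > n\eta$), so assume $|{C}| \geq 2$ and divide the inequality $|{C}|(|{C}|-1)d \leq |{C}|^2 n\eta$ through by $|{C}| > 0$ to obtain $(|{C}|-1)d \leq |{C}| n\eta$. Next I would collect the terms involving $|{C}|$ on one side, rewriting this as $|{C}|d - |{C}|n\eta \leq d$, that is, $|{C}|(d - n\eta) \leq d$.

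The final step uses the standing hypothesis $d > n\eta$, which guarantees $d - n\eta > 0$, so dividing both sides by the positive quantity $d - n\eta$ preserves the inequality and yields exactly
$$|{C}| \;\leq\; \frac{d}{d - n\eta},$$
as claimed. The sign condition $d > n\eta$ is precisely what makes this last division legitimate, mirroring the role of $d > \frac{q-1}{q}n$ in the classical Plotkin bound and $\gamma n < d$ in the homogeneous version cited earlier.

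There is essentially no obstacle here, since all the analytic and combinatorial content — the averaging over cosets, the Cauchy–Schwarz estimate in Lemma \ref{hammingaverage}, and the key probabilistic inequality of Proposition \ref{probineq} bounding the expected overweight distance by $\eta$ — has been absorbed into Proposition \ref{thatineq}. The only point requiring a moment's care is the bookkeeping in the trivial case and confirming that the direction of the inequality is maintained throughout; the substantive difficulty was already overcome in establishing $\eta = 2(1 - 1/|J|) \geq 1$ and the coset-partition argument of the preceding proposition.
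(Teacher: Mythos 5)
Your proposal is correct and follows exactly the paper's own argument: divide the inequality of Proposition \ref{thatineq} by $|{C}|$ to get $|{C}|(d-n\eta)\leq d$, then divide by the positive quantity $d-n\eta$. The extra care about the case $|{C}|\leq 1$ is a harmless refinement the paper omits.
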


\begin{proof}
  We divide both sides of the inequality in Proposition \ref{thatineq}
  by $|{C}|$ to get that
$$ |{C}| (d-n\eta) \leq d.$$
The result then follows from the assumption that $d - n\eta > 0$.
\end{proof}

By rearranging the same inequality, we also get the following version
of the Plotkin bound, which does not require the assumption that
$d > n\eta$.

\begin{corollary}
  Let ${C} \subseteq R^n$ be a (not necessarily linear) code with
  $\mid {C} \mid \geq 2$ and let $d=D({C})$. Then
$$ d \leq \frac{|{C}|n \eta}{|{C}| - 1}.$$
\end{corollary}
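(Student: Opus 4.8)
The plan is to derive this corollary directly from Proposition \ref{thatineq}, which already packages all the real work. The key observation is that Proposition \ref{thatineq} gives the chain of inequalities $|C|(|C|-1)d \leq \sum_{x,y} D(x,y) \leq |C|^2 n\eta$ for \emph{any} code $C$, with no hypothesis relating $d$ and $n\eta$. In particular the outer inequality $|C|(|C|-1)d \leq |C|^2 n\eta$ holds unconditionally. So the entire content of the corollary is a purely algebraic rearrangement of this single inequality, and the only thing I must check is that the rearrangement is legitimate, i.e.\ that the quantities I divide by are positive.

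First I would invoke Proposition \ref{thatineq} to obtain
$$ |C|(|C|-1)\,d \;\leq\; |C|^2\, n\,\eta. $$
Here the hypothesis $|C| \geq 2$ guarantees that $|C| > 0$ and, crucially, that $|C| - 1 > 0$, so both $|C|$ and $|C|-1$ are strictly positive integers. I would then divide both sides by $|C|(|C|-1)$, which is legitimate precisely because $|C| \geq 2$, to isolate $d$ on the left:
$$ d \;\leq\; \frac{|C|^2\, n\,\eta}{|C|(|C|-1)} \;=\; \frac{|C|\, n\,\eta}{|C|-1}, $$
which is exactly the claimed bound. This is the whole argument; unlike the Plotkin bound proper, no case analysis on the sign of $d - n\eta$ is needed, since we are solving for $d$ rather than for $|C|$.

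The only genuinely load-bearing point is the role of the hypothesis $|C| \geq 2$: without it the factor $|C|-1$ could be zero (the trivial one-codeword case), and the division would be undefined, which is presumably why the authors added this assumption to the statement. I expect no real obstacle here, since this is a one-line consequence of an already-proved inequality; the main thing to get right is simply to flag that the division by $|C|-1$ requires $|C| \geq 2$ and to record the simplification $|C|^2/(|C|(|C|-1)) = |C|/(|C|-1)$. I would keep the proof to two or three lines accordingly.
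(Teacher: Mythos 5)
Your proof is correct and matches the paper's argument exactly: both divide the outer inequality of Proposition \ref{thatineq} by $|{C}|(|{C}|-1)$, using $|{C}|\geq 2$ to justify the division. Nothing more is needed.
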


\begin{proof}
  We obtain this by dividing both sides of the inequality in
  Proposition \ref{thatineq} with $|{C}| (|{C}|-1)$, which is non-zero
  by assumption.
\end{proof}

\begin{remark}
  Note that $W$ is a homogeneous weight on $\mathbb{Z}_4$:
  the average weight on non-zero ideals is constant and if two elements
  generate the same ideal, they have the same weight. In this case,
  our bound coincides with the bound from~\cite{gr04} for the
  homogeneous weight on $\mathbb{Z}_4$.
\end{remark}

\section{A Johnson Bound for the Homogeneous Weight}

In this section, we prove a Johnson bound for the homogeneous weight
from Definition \ref{homwt}, denoted by $wt$ and let $\gamma$
be its average weight (on $R$). By abuse of notation we denote with $wt$ also the
extension of $wt$ to $R^n$, that is
$$wt(x) = \sum_{i=1}^n wt(x_i).$$

Note that $wt$ does not necessarily satisfy the
triangle inequality. In \cite[Theorem 2]{heise}, it is shown that the
homogeneous weight on $\mathbb{Z}_m$ satisfies the triangle
inequality if and only if $m$ is not divisible by $6$.

We define the ball of radius $r$ with respect to a homogeneous
weight $wt$ to be the set of all elements having distance less than or
equal to $r$.

\begin{definition}
  Let $y \in R^n$ and $r \in \mathbb{R}_{\geq 0}$. The ball
  $B_{r,wt}(y)$ of radius $r$ centered in $y$ is defined as
$$ B_{r,wt}(y) := \{ x \in R^n \: | \: wt(x-y) \leq r \}.$$
\end{definition}

Our aim is to provide a Johnson bound for the homogeneous weight over
Frobenius rings. Thus, we begin by defining list-decodability.

\begin{definition}
  Let $R$ be a finite ring. Given $\rho \in \mathbb{R}_{\geq 0}$, a
  code ${C} \subseteq R^n$ is called $(\rho, L)$-list decodable (with
  respect to $wt$) if for every $y \in R^n$ it holds that
$$| B_{\rho n, wt}(y) \cap {C} | \leq L.$$
\end{definition}

Over Frobenius rings, the following result holds, which will play an
important role in the proof of the Johnson bound.

\begin{proposition}[{\cite[Corollary 3.3]{gr04}}]\label{maxwt}
  Let $R$ be a Frobenius ring, ${C} \subseteq R^n$ a (not necessarily
  linear) code of minimal distance $d$ and
  $\omega = \max\{wt(c) \: | \: c \in {C} \}.$ If $\omega \leq \gamma n$, then
  $$
  |{C}|(|{C}| -1) d \leq \sum_{x,y \in {C}} wt(x-y) \leq 2|{C}|^2 \omega -
  \frac{|{C}|^2 \omega^2}{\gamma n}.
  $$
\end{proposition}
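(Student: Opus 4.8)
The plan is to prove the two inequalities separately. The left-hand inequality is immediate from the minimum distance: every ordered pair $(x,y)$ of \emph{distinct} codewords contributes $wt(x-y)\geq d$, there are exactly $|C|(|C|-1)$ such pairs, and the diagonal pairs $x=y$ contribute $wt(0)=0$; hence $\sum_{x,y\in C} wt(x-y)\geq |C|(|C|-1)d$. The substance of the statement is therefore the right-hand inequality, and this is where the Frobenius hypothesis and the assumption $\omega\leq \gamma n$ will be used.

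For the upper bound I would first split the double sum over coordinates, exactly as in the proof of Proposition~\ref{thatineq}. Writing $a_{i,r}=|\{x\in C : x_i=r\}|$ for the number of codewords taking value $r$ in the $i$th coordinate, additivity of $wt$ gives $\sum_{x,y\in C} wt(x-y)=\sum_{i=1}^n W_i$ with $W_i=\sum_{r,s\in R} a_{i,r}a_{i,s}\,wt(r-s)$, and likewise $\sum_{x\in C} wt(x)=\sum_{i=1}^n w_i$ with $w_i=\sum_{r\in R} a_{i,r}\,wt(r)$. The heart of the argument is the per-coordinate estimate $W_i\leq 2|C|w_i-\tfrac{1}{\gamma}w_i^2$. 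To obtain it I would invoke the fact that on a finite Frobenius ring the homogeneous weight is given by the character expression $wt(r)=\gamma\bigl(1-\tfrac{1}{|R^\times|}\sum_{u\in R^\times}\chi(ur)\bigr)$ for a generating character $\chi$. Substituting this and setting $A_i(u)=\sum_r a_{i,r}\chi(ur)$, one finds $W_i=\gamma|C|^2-\tfrac{\gamma}{|R^\times|}\sum_{u\in R^\times}|A_i(u)|^2$ and $\sum_{u\in R^\times} A_i(u)=|R^\times|\bigl(|C|-\tfrac{w_i}{\gamma}\bigr)$. Cauchy--Schwarz then yields $\sum_{u}|A_i(u)|^2\geq \tfrac{1}{|R^\times|}\bigl|\sum_u A_i(u)\bigr|^2=|R^\times|\bigl(|C|-\tfrac{w_i}{\gamma}\bigr)^2$, and expanding gives precisely $W_i\leq 2|C|w_i-\tfrac{1}{\gamma}w_i^2$, with no sign restriction on $|C|-w_i/\gamma$.

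Summing this over $i$ and applying Cauchy--Schwarz once more in the form $\sum_i w_i^2\geq \tfrac{1}{n}\bigl(\sum_i w_i\bigr)^2$, I obtain $\sum_{x,y\in C} wt(x-y)\leq 2|C|W-\tfrac{W^2}{\gamma n}$, where $W=\sum_{x\in C} wt(x)$. The final step is to maximise the right-hand side over the admissible range of $W$. The quadratic $g(W)=2|C|W-\tfrac{W^2}{\gamma n}$ is a concave (downward-opening) parabola with vertex at $W=|C|\gamma n$, hence increasing on $[0,|C|\gamma n]$; since $W\leq |C|\omega\leq |C|\gamma n$ by the hypothesis $\omega\leq\gamma n$, monotonicity gives $g(W)\leq g(|C|\omega)=2|C|^2\omega-\tfrac{|C|^2\omega^2}{\gamma n}$, which is the claimed bound.

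I expect the per-coordinate estimate to be the main obstacle: the clean quadratic bound $W_i\leq 2|C|w_i-\tfrac{1}{\gamma}w_i^2$ is exactly what makes the concluding monotonicity argument work, and it rests essentially on the Frobenius hypothesis through the character formula together with the positivity supplied by Cauchy--Schwarz. The two places demanding care are checking that the per-coordinate bound holds irrespective of the sign of $|C|-w_i/\gamma$ (so that no coordinate must be discarded before summing), and verifying that the assumption $\omega\leq\gamma n$ places $W$ within the increasing branch of $g$, which is the sole role of that hypothesis.
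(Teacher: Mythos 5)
The paper does not actually prove this proposition---it imports it verbatim as \cite[Corollary 3.3]{gr04}---so there is no internal proof to compare against. Your argument is correct and complete, and it is essentially the standard proof from the cited source: the left inequality from counting ordered pairs of distinct codewords, and the right inequality via the character formula $wt(r)=\gamma\bigl(1-\tfrac{1}{|R^\times|}\sum_{u\in R^\times}\chi(ur)\bigr)$ for the homogeneous weight on a Frobenius ring (this is exactly where the Frobenius hypothesis enters), Cauchy--Schwarz per coordinate to get $W_i\leq 2|C|w_i-w_i^2/\gamma$, a second Cauchy--Schwarz across coordinates, and finally monotonicity of the concave quadratic $g(W)=2|C|W-W^2/(\gamma n)$ on $[0,|C|\gamma n]$, which is the sole place the hypothesis $\omega\leq\gamma n$ is used. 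All the delicate points you flag are handled correctly: the per-coordinate bound needs no sign condition on $|C|-w_i/\gamma$ because that quantity is real and gets squared, and $W\leq |C|\omega\leq |C|\gamma n$ indeed places $W$ on the increasing branch of $g$.
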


With this, we get an analogue of the Johnson bound  for the homogeneous weight.

\begin{theorem}
  Let $R$ be a Frobenius ring and ${C} \subseteq R^n$ be a (not
  necessarily linear) code of minimum distance $d$. Assume that
  $\rho \leq \gamma$. Then it holds that ${C}$ is $(\rho, d\gamma n)$
  list-decodable if one of the following conditions is satisfied:
  \begin{itemize}
  \item[i)] We have that $\gamma\,n(d-\gamma\,n) \geq 1$.
  \item[ii)] It holds that
    $\rho \leq \gamma - \sqrt{(\gamma - \frac{d}{n})\gamma +
      \frac{1}{n^2}}$.
  \end{itemize}
\end{theorem}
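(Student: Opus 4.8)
The plan is to fix a center and reduce the list-decoding statement to a single application of Proposition~\ref{maxwt} followed by an elementary analysis of a quadratic. Fix an arbitrary $y \in R^n$ and set $C' = B_{\rho n, wt}(y) \cap C$, writing $M = |C'|$; the goal is to show $M \leq d\gamma n$ (the case $M=0$ being trivial). First I would translate the configuration by $y$, replacing $C'$ with $C'' := \{\, c - y \mid c \in C'\,\}$. Since $wt\big((c-y)-(c'-y)\big) = wt(c-c')$, translation leaves all pairwise distances unchanged, so the minimum distance of $C''$ is still at least $d$; crucially this uses only the cancellation $c - c'$ and not the triangle inequality, which $wt$ need not satisfy. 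Moreover every $c'' \in C''$ satisfies $wt(c'') = wt(c-y) \leq \rho n$, so $\omega := \max_{c'' \in C''} wt(c'') \leq \rho n \leq \gamma n$, which is precisely the hypothesis required by Proposition~\ref{maxwt}.

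Applying that proposition to $C''$ and dividing by $M$ yields
$$ M\Big(d - 2\omega + \tfrac{\omega^2}{\gamma n}\Big) \;\leq\; d. $$
I would then study the function $g(\omega) := d - 2\omega + \omega^2/(\gamma n)$, a quadratic opening upward with vertex at $\omega = \gamma n$, hence decreasing on $[0,\gamma n]$. Since $0 \leq \omega \leq \rho n \leq \gamma n$, this gives both $g(\omega) \geq g(\rho n)$ and $g(\omega) \geq g(\gamma n) = d - \gamma n$. Whenever $g(\omega) > 0$, the displayed inequality reads $M \leq d/g(\omega)$, and the two hypotheses of the theorem turn out to be exactly the two ways of forcing $d/g(\omega) \leq d\gamma n$.

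For case (i) I would use the vertex bound: condition (i) forces $d - \gamma n > 0$, whence $g(\omega) \geq d - \gamma n > 0$ and $M \leq d/(d-\gamma n)$; the inequality $d/(d-\gamma n) \leq d\gamma n$ is equivalent to $\gamma n(d-\gamma n)\geq 1$, which is (i). For case (ii) I would instead use $g(\omega) \geq g(\rho n)$ and show (ii) is equivalent to $g(\rho n) \geq 1/(\gamma n)$: substituting $\omega = \rho n$, clearing denominators and completing the square in $\rho$ converts $g(\rho n) \geq 1/(\gamma n)$ into $(\gamma-\rho)^2 \geq \gamma\big(\gamma - \tfrac{d}{n}\big) + \tfrac{1}{n^2}$, and taking square roots (legitimate since $\rho \leq \gamma$) reproduces (ii) verbatim; then $M \leq d/g(\rho n) \leq d\gamma n$.

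The arithmetic rearrangements are routine; the one genuinely delicate point is the sign-and-monotonicity bookkeeping for $g$. I must confirm $g(\omega) > 0$ before dividing --- guaranteed in case (i) by $d - \gamma n > 0$ and in case (ii) by $g(\rho n) \geq 1/(\gamma n) > 0$ --- and keep the inequality direction straight when passing from $\omega$ to the endpoints $\rho n$ and $\gamma n$. One last subtlety deserving a remark: the radicand in (ii) is real only when $\gamma\big(\gamma - \tfrac{d}{n}\big) + \tfrac{1}{n^2} \geq 0$; in the complementary regime $(\gamma-\rho)^2 \geq \gamma\big(\gamma-\tfrac{d}{n}\big)+\tfrac{1}{n^2}$ holds automatically, so $g(\rho n) \geq 1/(\gamma n)$ comes for free and the conclusion still stands.
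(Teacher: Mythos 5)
Your proof is correct and follows essentially the same route as the paper: translate the ball to the origin, apply Proposition~\ref{maxwt} to the intersected code, divide by the list size, and show the resulting quadratic factor is at least $1/(\gamma n)$ under either hypothesis. The only differences are cosmetic --- you keep $\omega$ and invoke monotonicity of $g$ explicitly (evaluating at $\gamma n$ for case (i) and at $\rho n$ for case (ii)) where the paper substitutes the radius directly and completes the square once, and your closing remark about a negative radicand matches the paper's observation that this regime is already covered by condition (i).
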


\begin{proof}
  Assume that $e \leq \rho n$ and let $y \in R^n$. We have to show
  that under the given conditions
  $|B_{e, wt}(y) \cap {C}| \leq d\gamma n$.

  Note first that we may assume that $y = 0$, otherwise simply
  consider the translate
$${C}' = \{ c-y \: | \: c \in {C} \}.$$
Assume that $x_1, \ldots, x_N$ are in $B_{e, wt}(0) \cap {C}.$ We have
that $wt(x_i - x_j) \geq d$ for $i \neq j$, thus using Proposition
\ref{maxwt} and $wt(x-y) = wt(y-x)$, we get that
$$
N (N-1) \frac{d}{2} \leq \sum_{i < j} wt(x_i - x_j) \leq N^2 e -
\frac{N^2 e^2}{2\gamma n}.
$$
Hence it follows that
$$
N (d \gamma n -2e\gamma n + e^2) \leq d \gamma n.
$$
It holds that
$$
(d \gamma n -2e\gamma n + e^2) = (n \gamma - e)^2 - n \gamma (n \gamma
- d).
$$
If we assume that $n \gamma (n \gamma - d) \leq -1$, then we clearly
have $$ (n \gamma - e)^2 - n \gamma (n \gamma - d) \geq 1.$$ If this is
not the case, we see that
$\sqrt{(\gamma - \frac{d}{n}) \gamma + \frac{1}{n^2}}$ is
well-defined. So, if
$$
\frac{e}{n} \leq \gamma - \sqrt{(\gamma - \frac{d}{n})\gamma +
  \frac{1}{n^2}},
$$
then
$$
(n \gamma - e) \geq \sqrt{(n \gamma -d)n \gamma +1},
$$
and hence
$$
(n \gamma - e)^2 - n \gamma (n \gamma - d) \geq 1.
$$
It follows that $N \leq d \gamma n.$
\end{proof}

\begin{remark}
  Note that the second condition already forces $\rho \leq \gamma$.
\end{remark}

\bibliographystyle{plain} \bibliography{biblio.bib}
\end{document}